\newcommand{\be}{\begin{equation}}
\newcommand{\ee}{\end{equation}}
\newcommand{\bea}{\begin{eqnarray*}}
\newcommand{\eea}{\end{eqnarray*}}
\newcommand{\x}{\mathbf{x}}
\newcommand{\y}{\mathbf{y}}
\newcommand{\rr}{\mathbf{r}}
\newcommand{\z}{\mathbf{z}}
\newtheorem{thm}{Theorem}[section]
\newtheorem{lem}[thm]{Lemma}
\newtheorem{rem}[thm]{Remark}
\begin{document}
\title{A remark on the Restricted Isometry Property in Orthogonal Matching Pursuit}
\author{Qun Mo and~Yi Shen
\thanks
{ Research supported in part by the NSF of China under grant
10971189, 11101359, the Doctoral Program Foundation of Ministry of
Education of China under grant 20070335176, Science Foundation of
Chinese University under grant 2010QNA3018 and the Science
Foundation of Zhejiang Sci-Tech University under grant 1113834-Y. }
\thanks
{
Q. Mo is with the Department of Mathematics, Zhejiang University,
Hangzhou, 310027, China (e-mail: moqun@zju.edu.cn ).
}
\thanks
{ Y. Shen is the corresponding author. He is with Department of
Mathematics and Science, Zhejiang Sci-Tech University, Hangzhou
310018, China (e-mail: yshen@zstu.edu.cn). } }

\maketitle

\begin{abstract}
This paper demonstrates that if the restricted isometry constant
$\delta_{K+1}$ of the measurement matrix $A$ satisfies
$$
\delta_{K+1} < \frac{1}{\sqrt{K}+1},
$$
then a greedy algorithm called  Orthogonal Matching Pursuit (OMP)
can recover every $K$--sparse signal $\mathbf{x}$ in $K$ iterations
from $A\x$. By contrast, a matrix is also constructed with the
restricted isometry constant
$$
\delta_{K+1} = \frac{1}{\sqrt{K}}
$$
such that  OMP can not recover some $K$--sparse signal $\mathbf{x}$
in $K$ iterations. This result positively verifies the conjecture
given by Dai and Milenkovic in 2009.
\end{abstract}

\begin{IEEEkeywords}
compressed sensing, restricted isometry property, orthogonal
matching pursuit, sparse signal reconstruction.
\end{IEEEkeywords}

\IEEEpeerreviewmaketitle

\section{Introduction}
\IEEEPARstart{C}{ompressive}  sensing is a new type of sampling
theory. It shows that it is highly possible to reconstruct sparse
signals and images from what was previously believed to be
incomplete information \cite{CRT}. Let $\mathbf{x}\in \mathbb{R}^n$
be a signal, we want to recover it from a linear measurement
 \be\label{Axy}
 A \x=\y,
 \ee
where $A$ is a given $m\times n$ measurement matrix. In general, if
$m<n$, the solution of (\ref{Axy}) is not unique. To recover $\x$
uniquely, some additional assumptions on $\x$ and $A$ are needed. We
are interested in the case when $\x$ is sparse. Let $\|\x\|_0$ denote the number of nonzero entries of $\x$.
We say that a vector $\x$ is $K$--\emph{sparse} when $\|\x\|_0\leq K$. To
recover such a signal $\x$, a natural choice is to seek a solution
of the $l_0$ minimization problem
 $$
 \min_{\x} \|\x\|_0 \quad \text{subject to}  \quad A\x=\y
 $$
where $A$ and $\y$ are known.  To  ensure the $K$--sparse solution
is unique, we would like to use the restricted isometry
property  introduced by Cand\`es and Tao in \cite{CT}. A
matrix $A$ satisfies the \emph{restricted isometry
property} of order $K$ with the \emph{restricted
isometry constant} $\delta_K$ if $\delta_K$ is the smallest constant
such that
 $$
 (1-\delta_K)\|\x\|_2^2
 \leq
 \|A \x \|_2^2
 \leq
 (1+\delta_K)\|\x\|_2^2
 $$
holds for all $K$--sparse signal $\x$. If $\delta_{2K}<1$,  the
$l_0$ minimization problem has a unique $K$--sparse solution
\cite{CT}. The $l_0$  minimization problem is equal to the $l_1$
minimization problem when $\delta_{2K}<\sqrt{2}-1$ \cite{C}.
Recently, Mo and Li have improved the sufficient condition to
$\delta_{2K}<0.4931$ \cite{LM}.

OMP is an effective greedy algorithm for seeking the solution of the
$l_0$ minimization problem. Basic references for this method  are
\cite{DMA,PRK} and \cite{T}. For a given $m\times n$ matrix $A$, we
denote  the matrix  with indices of its columns in $\Omega$ by
$A_{\Omega}$. We shall use the same way to deal with the restriction
$\x_{\Omega}$ of the vector $\mathbf{x}$. Let $e_i$ be the $i$th
coordinate unit vector in $\mathbb R^n$. The iterative algorithm
below shows the framework of OMP.

\begin{center}
\begin{itemize}
\item [] \textbf{Input}: $A$, $\y$
\item [] \textbf{Set}: $\Omega_0=\emptyset$, $\rr_0=\y$, $j=1$
\item [] \textbf{while not converge}
\begin{itemize}
\item $\Omega_{j}= \Omega_{j-1} \cup \arg\max_{i} |\langle A e_i,
\rr_{j-1}\rangle| $
\item
$\x_j = \arg\min_\z \| A_{\Omega_j} \z - \y \|_2 $
\item
$\rr_j = \y - A_{\Omega_j} \x_j $
\item
$j = j+1$
\end{itemize}
\item [] \textbf{end while}
\item [] $\hat{\x}_{\Omega_j} = \x_j$,
$\hat{\x}_{\Omega_j^C} = 0$
\item [] \textbf{Return} $\hat{\x}$
\end{itemize}
\end{center}

Davenport and Wakin have proved that
$\delta_{K+1}<\tfrac{1}{3\sqrt{K}}$ is sufficient for OMP to recover
any $K$--sparse signal in $K$ iterations \cite{DW}. Later, Liu and
Temlyakov have improved the condition to
$\delta_{K+1}<\tfrac{1}{(1+\sqrt{2})\sqrt{K}}$ \cite{LT}. By
contrast, Dai and Milenkovic have conjectured that there exist a
matrix  with $\delta_{K+1} \leq \tfrac{1}{\sqrt{K}}$ and a
$K$--sparse vector for which OMP  fails in $K$ iterations. This
conjecture has been confirmed via numerical experiments in \cite{DW}
for the case $K = 2$.
The main results of this paper are consisted by two parts.
\begin{itemize}
\item We prove that
$$
\delta_{K+1}\leq \frac{1}{\sqrt{K}+1}
$$
is sufficient for OMP to exactly recover every $K$--sparse $\x$ in
$K$ iterations.

\item For any given $K\geq2$, we  construct a matrix with
$$\delta_{K+1}=\frac{1}{\sqrt{K}}$$ where OMP fails for at least
one $K$--sparse signal in $K$ iterations.
\end{itemize}

\section{Preliminaries}\label{section1}
Before going further, we introduce some notations.
Suppose $\x$ is a $K$--sparse signal in
$\mathbb{R}^n$. In the rest of this paper, we assume that
$$\x=(x_1,\ldots,x_k,0,\ldots,0)$$
where $x_i\neq 0$, $i=1,2,\ldots,k$, $k\leq K$.  For a given $K$--sparse
signal $\x$ and a given matrix $A$,  we define
$$
S_i := \langle  Ae_i,A\x  \rangle, \quad i=1,\ldots, n.
$$
Denote $ S_0: = \max_{i\in\{1,\ldots, K\}} |S_i| $. The following
lemma is useful in our analysis.

\begin{lem}\label{lem2}
Suppose that the restricted isometry constant $\delta_{K+1}$ of a
matrix $A$ satisfies
 $$
 \delta_{K+1} < \frac{1}{\sqrt{K}+1},
 $$
then $S_0> |S_i|$ for $i>K$.
\end{lem}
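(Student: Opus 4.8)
The plan is to sandwich the relevant quantities: derive a lower bound for $S_0$ and an upper bound for $|S_i|$ when $i>K$, and then check that the hypothesis on $\delta_{K+1}$ is precisely what forces the former to exceed the latter.

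First I would bound $S_0$ from below. The key identity is $\sum_{i=1}^k x_i S_i = \langle A\x, A\x\rangle = \|A\x\|_2^2$, which follows by linearity since $\x=\sum_{i=1}^k x_i e_i$. The left-hand side is at most $(\max_{1\le i\le k}|S_i|)\,\|\x\|_1 \le S_0\|\x\|_1$, while the right-hand side is at least $(1-\delta_{K+1})\|\x\|_2^2$ by the restricted isometry property of order $K$, using $\delta_K\le\delta_{K+1}$. Combining these with the Cauchy--Schwarz estimate $\|\x\|_1\le\sqrt{k}\,\|\x\|_2\le\sqrt{K}\,\|\x\|_2$ yields $S_0 \ge \frac{1-\delta_{K+1}}{\sqrt{K}}\,\|\x\|_2$.

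Next I would bound $|S_i|$ for $i>K$ from above. Since $\x$ is supported on $\{1,\dots,k\}$ with $k\le K<i$, the vectors $e_i$ and $\x$ have disjoint supports whose union has size at most $K+1$. I would invoke the standard consequence of the restricted isometry property that $|\langle Au, Av\rangle|\le \delta_{K+1}\|u\|_2\|v\|_2$ for such disjointly supported $u,v$; applied with $u=e_i$ and $v=\x$ this gives $|S_i|\le \delta_{K+1}\|\x\|_2$. If this is not available as a cited fact, it follows from polarization: expanding $\|A(e_i\pm \x/\|\x\|_2)\|_2^2$ and applying the two-sided RIP bounds to the $(K+1)$--sparse vectors $e_i\pm \x/\|\x\|_2$ isolates the cross term.

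Finally I would combine the two estimates. Since $\x\neq 0$, it suffices to show $\frac{1-\delta_{K+1}}{\sqrt{K}}>\delta_{K+1}$, which rearranges to $(\sqrt{K}+1)\delta_{K+1}<1$, i.e. exactly the hypothesis $\delta_{K+1}<\frac{1}{\sqrt{K}+1}$. Then $S_0\ge \frac{1-\delta_{K+1}}{\sqrt{K}}\|\x\|_2 > \delta_{K+1}\|\x\|_2 \ge |S_i|$ for every $i>K$, as desired. I expect the main obstacle to be the upper bound on $|S_i|$, namely pinning down that the relevant isometry order is $K+1$ rather than $K$ and supplying the disjoint-support inner-product inequality; once that is in hand the arithmetic matches the stated threshold exactly, which is reassuring and suggests the constant is tight.
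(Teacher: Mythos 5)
Your proposal is correct and follows essentially the same route as the paper: the lower bound $S_0\geq\frac{(1-\delta_{K+1})\|\x\|_2}{\sqrt{K}}$ via $\langle A\x,A\x\rangle=\sum_i x_iS_i\leq S_0\|\x\|_1\leq S_0\sqrt{K}\|\x\|_2$, and the upper bound $|S_i|\leq\delta_{K+1}\|\x\|_2$ for $i>K$ via the disjoint-support inner-product inequality, which the paper cites as Lemma 2.1 of Cand\`es (your polarization fallback is exactly how that lemma is proved). The arithmetic combining the two bounds matches the paper's as well.
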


\begin{proof}
By  Lemma 2.1 in \cite{C}, we have
 \be\label{eq1}
 |S_i| = |\langle  Ae_i,A\x  \rangle| \leq \delta_{K+1}
 \|\x\|_2\quad
 \text{for all}\ i>K.
 \ee
For the given $K$--sparse $\x$, we obtain
 \bea
 \langle  A\x,A\x  \rangle
 & = & \left \langle  A\sum_{i=1}^K{x_ie_i},A\x \right \rangle \\
 & = & \sum_{i=1}^K  x_i  \langle Ae_i, A\x\rangle \\
 & = & \sum_{i=1}^K  x_i S_i.
 \eea
It follows
 \bea
 (1-\delta_{K+1}) \|\x\|_2^2 & \leq &\langle  A\x,A\x  \rangle \\
 & = & \sum_{i=1}^K  x_i S_i \\
 &\leq & S_0 \|\x\|_1 \\
 &\leq & S_0  \sqrt{K} \|\x\|_2.
 \eea
This implies
 \be\label{eq2}
 \frac{(1-\delta_{K+1})\|\x\|_2}{\sqrt{K}}\leq S_0.
 \ee
It follows from (\ref{eq1}) and (\ref{eq2}) that the lemma holds.
\end{proof}

\section{Main Results}
This section establishes the main results of this paper.

\begin{thm}\label{thm1}
Suppose that $A$ satisfies the restricted isometry
property of order $K+1$ with the
restricted isometry constant
 $$
 \delta_{K+1} < \frac{1}{\sqrt{K}+1},
 $$
then for any $K$--sparse signal $\x$,  OMP will recover $\x$  from
$\y=A\x$ in $K$ iterations.
\end{thm}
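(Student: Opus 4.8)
The plan is to prove by induction on the iteration count that OMP selects a "correct" index at every step, meaning an index $i \in \{1,\ldots,k\}$ corresponding to a nonzero entry of the true signal $\x$. If this holds for all iterations, then after selecting $k$ distinct correct indices the residual must vanish (since the least-squares fit over the full support recovers $\x$ exactly), and OMP terminates having recovered $\x$ in at most $K$ iterations.

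The base case is exactly Lemma~\ref{lem2}: at the first iteration the residual is $\rr_0 = \y = A\x$, so the correlations are $|\langle Ae_i, \rr_0\rangle| = |S_i|$, and the lemma guarantees $S_0 = \max_{i\le K}|S_i| > |S_i|$ for every $i > K$. Hence the $\arg\max$ picks an index in the true support and $\Omega_1 \subseteq \{1,\ldots,k\}$. For the inductive step I would assume that after $j$ iterations $\Omega_j \subseteq \{1,\ldots,k\}$ consists of correct indices. The key observation is that the residual $\rr_j = \y - A_{\Omega_j}\x_j$ has a clean structure: by the least-squares normal equations, $A_{\Omega_j}^{\!\top}\rr_j = 0$, so $\rr_j$ is orthogonal to the selected columns and hence $\langle Ae_i, \rr_j\rangle = 0$ for $i \in \Omega_j$. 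More importantly, $\rr_j = A(\x - \x_j')$ where $\x_j'$ is $\x_j$ extended by zeros off $\Omega_j$; thus $\rr_j$ is itself $A$ applied to a vector supported on $\{1,\ldots,k\}$, which is $K$-sparse. This means I can reapply the same argument that proved Lemma~\ref{lem2}, but now with the residual playing the role of $A\x$.

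Concretely, writing $\mathbf{w} = \x - \x_j'$ (supported on $\{1,\ldots,k\}$, hence $K$-sparse), the residual is $\rr_j = A\mathbf{w}$, and the new correlations are $\tilde S_i = \langle Ae_i, A\mathbf{w}\rangle$. The estimate $|\tilde S_i| \le \delta_{K+1}\|\mathbf{w}\|_2$ for $i > K$ follows from the coherence bound (Lemma~2.1 of \cite{C}) exactly as in equation~(\ref{eq1}), since appending the index $i>K$ to the support of $\mathbf{w}$ yields a set of size at most $K+1$. For the lower bound on the maximal correlation over the true support, I would repeat the computation $\langle A\mathbf{w}, A\mathbf{w}\rangle = \sum_{i=1}^k w_i \tilde S_i \le \tilde S_0 \sqrt{K}\|\mathbf{w}\|_2$, giving $\tilde S_0 \ge (1-\delta_{K+1})\|\mathbf{w}\|_2/\sqrt{K}$, where $\tilde S_0 = \max_{i\le K}|\tilde S_i|$. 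The hypothesis $\delta_{K+1} < 1/(\sqrt{K}+1)$ then forces $\tilde S_0 > |\tilde S_i|$ for all $i>K$, so the next selected index again lies in $\{1,\ldots,k\}$, and since the correlations vanish on the already-chosen indices of $\Omega_j$, the newly chosen index is distinct from those and the support set grows.

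The step I expect to demand the most care is verifying that the index selected at iteration $j+1$ is genuinely \emph{new}, i.e. not already in $\Omega_j$, so that the support strictly grows and the algorithm indeed finishes within $K$ steps. This is handled by the orthogonality $\langle Ae_i,\rr_j\rangle = 0$ for $i\in\Omega_j$: these correlations are zero while the maximal correlation $\tilde S_0$ is strictly positive (as long as $\mathbf{w}\neq 0$, equivalently the residual is nonzero and the algorithm has not yet terminated), so the $\arg\max$ cannot fall inside $\Omega_j$. One must also confirm that $\mathbf{w} \neq 0$ precisely when $\rr_j \neq 0$, which uses $\delta_{K+1}<1$ to ensure $A$ is injective on $K$-sparse vectors. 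Assembling these observations, after at most $k \le K$ iterations all correct indices are recovered, the residual is zero, and $\hat\x = \x$.
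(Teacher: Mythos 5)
Your proof is correct and follows essentially the same route as the paper: establish the first iteration via Lemma~\ref{lem2}, then induct, using the fact that the residual $\rr_j = A(\x-\x_j')$ is $A$ applied to a vector supported on the true support so the lemma's argument reapplies, with orthogonality of $\rr_j$ to the chosen columns forcing a new index each time. The paper's own proof merely asserts this induction in one sentence, so your write-up is a faithful (and more complete) expansion of the intended argument.
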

\begin{proof}
Consider the first iteration, the sufficient condition for OMP
choosing an index from $\{1,\ldots, K\}$  is
$$
S_0> |S_i |\quad  \text{for all} \  i>K.
$$
By Lemma \ref{lem2}, $\delta_{K+1} < \frac{1}{\sqrt{K}+1}$
guarantees the success of the first iteration. OMP makes an
orthogonal projection in each iteration. By induction, it can be
proved that OMP selects a different index from  $\{1,2,\ldots, K\}$
in each iteration.
\end{proof}

\begin{thm}\label{thm2}
For any given positive integer $K\geq 2$, there exist a $K$--sparse
signal $\x$ and a matrix $A$ with the restricted isometry constant
$$
\delta_{K+1} = \frac{1}{\sqrt{K}}
$$
for which OMP  fails in $K$ iterations.
\end{thm}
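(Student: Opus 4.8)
The goal is to construct an explicit counterexample: a matrix $A$ whose restricted isometry constant of order $K+1$ equals exactly $1/\sqrt{K}$, together with a $K$-sparse signal on which OMP picks a wrong index. The natural strategy is to engineer the very first iteration to fail. Since OMP's first choice maximizes $|\langle Ae_i, A\x\rangle|$ over all $i$, I want to build $A$ so that some column outside the support of $\x$ correlates with $A\x$ at least as strongly as every column inside the support. Looking at Lemma~\ref{lem2}, the bound $\delta_{K+1} < 1/(\sqrt K + 1)$ was exactly what forced $S_0 > |S_i|$ for $i>K$; at the threshold $\delta_{K+1} = 1/\sqrt K$ that inequality should be attainable with equality or reversed, so the construction should be tuned to make the chain of inequalities in Lemma~\ref{lem2} tight.

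Concretely, I would take $\x = (1,1,\ldots,1,0,\ldots,0)^{\mathsf T}$ (the all-ones vector on a support of size $K$), since equality in $\|\x\|_1 \le \sqrt K\,\|\x\|_2$ demands all nonzero entries be equal. Then I would design the Gram matrix $G = A^{\mathsf T} A$ restricted to the $K+1$ relevant columns. The idea is to make the first $K$ columns $Ae_1,\dots,Ae_K$ have a common, symmetric inner-product structure (a scalar multiple of identity plus a rank-one term), and then place a single extra column $Ae_{K+1}$ that is correlated equally with all of $A\x$. By symmetry, $S_1 = \cdots = S_K =: s$, so $S_0 = |s|$, and I would force $|S_{K+1}| = |s|$, making the wrong index $K+1$ a legitimate (indeed forced, under a tie-breaking convention) choice for OMP. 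The free parameters in $G$ are then pinned down by two requirements: that $G$ be positive semidefinite (so it is a genuine Gram matrix realizable by some $A$) and that its extreme eigenvalues on $(K+1)$-sparse vectors produce $\delta_{K+1} = 1/\sqrt K$ exactly.

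The main obstacle will be the eigenvalue bookkeeping: I must verify that the largest and smallest eigenvalues of every $(K+1)\times(K+1)$ principal submatrix of $G$ (equivalently, the full $G$ if I make $A$ have exactly $K+1$ columns) lie in $[1-\delta, 1+\delta]$ with $\delta = 1/\sqrt K$, and that this $\delta$ is genuinely the smallest such constant, not merely an upper bound. This means computing $\max(|\lambda_{\max}-1|, |1-\lambda_{\min}|)$ for the designed Gram matrix and checking it equals $1/\sqrt K$. The symmetric, low-rank structure of $G$ should make these eigenvalues computable in closed form (the relevant matrix will have only two or three distinct eigenvalues by symmetry), which is what makes the construction tractable.

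Once the matrix is in hand, the argument concludes quickly. I would exhibit the tie at the first OMP step: because $|S_{K+1}| \ge S_0 = \max_{i\le K}|S_i|$, OMP is permitted to select index $K+1 \notin \operatorname{supp}(\x)$. Since $\hat\x$ is supported on the chosen index set and OMP runs only $K$ iterations, selecting a spurious index at step one forces the final support to differ from $\operatorname{supp}(\x)$ (there is no room to both include the wrong index and recover all $K$ correct ones), so recovery fails. I would state this under a worst-case tie-breaking rule, or perturb $A$ infinitesimally to make the wrong index the strict maximizer if a strict failure is desired; in either reading the theorem's claim—that OMP can fail at $\delta_{K+1} = 1/\sqrt K$—holds, confirming the threshold in Theorem~\ref{thm1} is sharp.
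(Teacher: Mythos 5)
Your proposal is correct and follows essentially the same route as the paper: the paper takes the all-ones $K$-sparse vector and the explicit $(K+1)\times(K+1)$ matrix whose Gram matrix is $I_{K+1}$ except for off-diagonal entries $1/K$ in the last row and column, computes the eigenvalues $1$ (multiplicity $K-1$) and $1\pm 1/\sqrt K$ to get $\delta_{K+1}=1/\sqrt K$ exactly, and observes that $S_1=\cdots=S_{K+1}=1$ so the first iteration can select the spurious index. Your additional remarks on tie-breaking (worst-case selection or an infinitesimal perturbation) only make explicit a point the paper leaves implicit.
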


\begin{proof}
For any given positive integer $K\geq 2$, let
$$
 A = \left(\begin{array}{cccc}
 &  &  & \frac{1}{K}  \\
 & I_{K}&  & \vdots  \\
 &  &  & \frac{1}{K}  \\
 0 & \ldots& 0 & \sqrt{\tfrac{K-1}{K}}
\end{array}
\right)_{(K+1)\times (K+1)}.
$$
By simple calculation,  we get
$$
 A^T A = \left(\begin{array}{cccc}
 &  &  & \frac{1}{K}  \\
 & I_{K}&  & \vdots  \\
 &  &  & \frac{1}{K}  \\
\frac{1}{K} & \ldots& \frac{1}{K} & 1
\end{array}
\right)_{(K+1)\times (K+1)},
$$
where $A^T$ denotes the transpose of $A$. It is obvious that the
eigenvalues $\{\lambda_i\}_{i=1}^{K+1}$ of $A^TA$ are
$$
\lambda_{1}=\cdots = \lambda_{K-1} =1,\ \lambda_{K}={1-
\tfrac{1}{\sqrt{K}}} \ \text{and}\
\lambda_{K+1}={1+\tfrac{1}{\sqrt{K}}}.
$$
Therefore, the restricted isometry constant $\delta_{K+1}$ of $A$ is
$ \tfrac{1}{\sqrt{K}} $.  Let
$$
\x = (1,1,\ldots,1,0)^T\in \mathbb R^{K+1}.
$$
We have
$$
S_i = \langle  Ae_i, A\x  \rangle =1\quad \text{for all}\ i \in
\{1,\ldots,K+1\}.
$$
This implies OMP fails in the first iteration. Since OMP chooses one
index in each iteration, we conclude that OMP fails in $K$
iterations for the given matrix $A$ and the vector $\x$.
\end{proof}

\begin{rem}
It is challenging to design a measurement matrix having a very small
restricted isometry constant $\delta_{K+1}$; and Theorem \ref{thm2}
shows that this kind of requirement is necessary. However, if we
select multiple indices per iteration, we can recover the
$K$--sparse signal given in Theorem \ref{thm2} in $K$ iterations.
Actually, this technique has been widely used in many related greedy
pursuit algorithms, such as CoSaMP \cite{NT} and Subspace Pursuit
algorithm \cite{DM}.
\end{rem}

\begin{biography}
{Qun Mo} was born in 1971 in China. He has obtained a B.Sc. degree
in 1994 from Tsinghua University, a M.Sc. degree in 1997 from
Chinese Academy of Sciences and a Ph.D. degree in 2003 from
University of Alberta in Canada.

He is current an associate professor in mathematics in Zhejiang University.
His research interests include compressed sensing, wavelet frames and subdivision schemes.
\end{biography}

\begin{biography}
{Yi Shen} was born in 1982 in China. He has obtained a B.S. degree
and a Ph.D. degree in mathematics from the Zhejiang University in
2004 and 2009, respectively.

He is current an associate researcher  in  mathematics in Zhejiang
Sci-Tech University. His research interests include compressed
sensing, wavelet analysis and its applications.
\end{biography}

\end{document}